%
%
%
%
%
%
%
\documentclass{svjour3}                     
\smartqed  
\usepackage{graphicx}
\usepackage{dcolumn}
\usepackage{bm}
\usepackage{booktabs}
\usepackage{xcolor}
\usepackage{amssymb}
\usepackage{amsmath}
\usepackage[numbers]{natbib}
%
%
%



%
 \journalname{Quantum Information Procesing}
\begin{document}

\title{Direct-dynamical Entanglement-Discord relations}


\author{Virginia Feldman       \and
        Jonas Maziero  \and
        A. Auyuanet
}


\institute{Virginia Feldman \at
             Instituto de F\'isica, Facultad de Ingenier\'ia, Universidad de la Rep\'ublica, J. Herrera y Reissig 565, 11300, Montevideo, Uruguay 
                   \and
           Jonas Maziero \at
              Instituto de F\'isica, Facultad de Ingenier\'ia, Universidad de la Rep\'ublica, J. Herrera y Reissig 565, 11300, Montevideo, Uruguay \\
                \emph{Present address:Departamento de F\'isica, Centro de Ci\^encias Naturais e Exatas, Universidade Federal de Santa Maria, Avenida Roraima 1000, 97105-900, Santa Maria, RS, Brazil}
                \and
               A. Auyuanet \at
               Instituto de F\'isica, Facultad de Ingenier\'ia, Universidad de la Rep\'ublica, J. Herrera y Reissig 565, 11300, Montevideo, Uruguay \\
             \email{auyuanet@fing.edu.uy}
}

\date{Quantum Inf. Proc.(2017)16:128 / doi:10.1007/s11128-017-1580-4}

\maketitle

\begin{abstract}
In this article, by considering Bell-diagonal
two-qubit initial states under local dynamics generated by
the Phase Damping, Bit Flip, Phase Flip, Bit-Phase Flip, and Depolarizing
channels, we report some elegant direct-dynamical relations between
geometric measures of Entanglement and Discord. The complex scenario
appearing already in this simplified case study indicates that a similarly
simple relation shall hardly be found in more general situations.
\keywords{Quantum Correlations \and Discord \and Entanglement}
\end{abstract}

\section{Introduction}
\label{intro}
                                                                                                                                                                                                                                                                                     
Since 1935 \cite{EPR,Borh,Schrodinger}, the quantumness of correlations
has triggered philosophical and scientific debates regarding
some unusual qualities of the physical world. From these discussions
grew quantum information science (QIS) \cite{nielsen2000quantum,Preskill,Wilde},
which is excitingly reverberating on several fields of scientific
inquire \cite{Zurek_Dar,OBrien,Nori,Ekert,LloydM,Jarzynski,McFadden,PetruccioneAI,PetruccioneNN,Svore,Andraca,Marvian,Preskill_15}.
There are some key concepts which are very significant for this quantum
revolution, among them are quantum contextuality \cite{Spekkens,Horodecki_Ct},
non-locality \cite{Dam,Wehner_RMP}, Steering \cite{Wiseman_St,Reid},
Entanglement \cite{Bruss_E02,Plenio_E07,Horodecki_RMP,Davidovich_RPP,GlobalEnt,sharma2016robustness,Sun2017},
Discord \cite{Zurek_D,Celeri_IJQI,Vedral_D_RMP,Streltsov}, and Coherence
\cite{Plenio_Ch,Adesso_ChE}. Understanding them and their interplay
is a fundamental current problem in QIS. \\
Although some of these quantumness measures have been indirectly related through purifications,
as they are qualitatively and operationally quite distinct, at first
sight there is no reason one should expect to exist a direct relation
between them. \\
In recent years special attention has been given to the geometric definition 
of quantifiers of correlations \cite{ShimonyEntanglement,GeometricEntanglement,BellomoGeometric,BellomoGeometric2,Dakic,GaussianDiscord,SquareNormDistanceCorr}.This geometric approach requires to choose a measure of distance, which has also been the subject of wide discussions.  
One specific property that is required for a measure of distance is its
contractivity. This feature is particularly important because it ensures that
the distance between two quantum states initially in an uncorrelated state
with its environment doesn't increase with time, which assures
that the distinguishability doesn't grow with time. The contractivity of the 
Trace norm for trace-preserving quantum evolution was proved by Ruskai \cite{Ruskaicontractivity},
but for the case of the Hilbert-Schmidt norm Osawa \cite{Osawa_2ndp} found one example in which 
this norm is not contractive. More recently, Wang and Schirmer \cite{Schirmer} demonstrated
necessary and sufficient conditions for contractivity of the Hilbert-Schmidt norm.\\
Indirect relations between Entanglement and Discord of different bipartitions
of a three-partite system were obtained in the literature \cite{Bruss_ED_M,Adesso_ED_M,Fanchini_Eirr,Fanchini_ED_cl,Yan_ED}.
However there is no straightforward reason to assume the existence of a direct relation
between Entanglement and Discord for a single bipartite system, although numerical
and analytical inequalities can be found \cite{James_ED,Adesso_EDb1,Adesso_EDb2,Vianna_ED,CampBell_ED,Angelo_ip}.\\
In this article we will consider the possible existence of a direct relation between Entanglement and Discord, two of the most prominent quantum correlations. 
In particular we calculate for initial Bell-diagonal states under the action of
several local trace-preserving quantum-channels (Bit Flip, Bit Phase Flip, Phase Damping and Depolarizing)
geometric measures for Entanglement and Discord and we find direct relations between them. 
In Sec. \ref{sec:geometricmeasurement} we present the geometric measures of correlations and the corresponding measures of distance. In Sec.\ref{sec:EntDiscBDS} we obtain the Entanglement-Discord relations for Bell-diagonal states under the Phase Damping Channel, using the Hilbert-Schmidt norm and the Trace norm. In particular for the Trace norm we find that our geometric measure of Entanglement is equal to the Concurrence  of the entangled state. In Sec. \ref{sec:EntDiscDEPO}
we obtain the Entanglement-Discord relations for Bell-diagonal states under the Depolarizing channel. Finally, in Sec.\ref{sec:conclusion} we summarize and discuss our results.

\section{Geometric measures of correlations}
\label{sec:geometricmeasurement}
Concerning the quantification of Entanglement and Discord, we choose a geometric
approach in order to set both quantifiers in equal foot,  
strategy that allows us to find a direct relation between them.
We can measure how entangled or Discordant
a general bipartite density operator $\rho$ is by its minimum distance
to the corresponding set not possessing that property, i.e.,:
\begin{eqnarray}
E(\rho) & = & \min_{\rho^{sep}}d(\rho,\rho^{sep}),\\
D(\rho) & = & \min_{\rho^{cc}}d(\rho,\rho^{cc}),
\end{eqnarray}
where $\rho^{sep}$ belongs to the set of separables states (zero Entanglement)
and $\rho^{cc}$ to the set of classical states (zero Discord).\\
In order to quantify Entanglement and Discord from a geometric approach,
we choose two particular cases of the
Schatten $p-$norm distance, which is defined using the $p$-norm:
$||A||_{p}^{p}:=\mathrm{Tr}\left(\sqrt{A^{\dagger}A}\right)^{p}$.
When $p=1$ we have the Trace norm that induces the distance $d_{1},$
\begin{equation}
 d_{1}(\rho,\sigma)= ||\rho-\sigma||_{1}
\end{equation}
and when $p=2$ we have the Hilbert-Schmidt
norm that induces the distance $d_{2},$
\begin{equation}
 d_{2}(\rho,\sigma)= ||\rho-\sigma||^2_{2}.
\end{equation}
As we discussed, the Trace norm is contractive; and using the results mentioned above \cite{Schirmer} 
we found that, for the quantum channels we work with: Phase Damping, Depolarizing, Bit Flip and Bit Phase Flip, 
the Hilbert-Schmidt norm is also contractive.

\section{Entanglement-Discord relations for Bell-diagonal
states evolving under local Phase Damping channel}

\label{sec:EntDiscBDS}

Bell-diagonal states have been the workhorse for analytical
computations of most quantum correlation functions. This is helpful,
for instance, when investigating the dynamics of these correlations
under decoherence \cite{Davidovich_PRAe,Sarandy_2SCe}, for their
experimental estimation \cite{Maziero_BJP_R,Pryde}, and to more
easily visualize and comprehend the geometrical structure of the quantum
state space \cite{Horodecki_BDS1,Horodecki_BDS2,Caves_BDS,Fan_BDS}.
The Bell-diagonal class of states is equivalent, under local unitaries,
to two-qubit states with maximally mixed marginals that can be written
as follows:
\begin{equation}
\rho=2^{-2}\left(\mathbb{I}_{4}+\vec{r}\cdot\vec{\Sigma}\right),\label{eq:BDS}
\end{equation}
with $\mathbb{I}_{n}$ denoting the $n\mathrm{x}n$ identity operator
and we define $\vec{\Sigma}=\sigma_{1}\otimes\sigma_{1}\hat{i}+\sigma_{2}\otimes\sigma_{2}\hat{j}+\sigma_{3}\otimes\sigma_{3}\hat{k}$
and $\vec{r}=r_{1}\hat{i}+r_{2}\hat{j}+r_{3}\hat{k}$. The
last vector is dubbed here as the correlation vector, because $r_{j}=\mathrm{Tr}(\rho\sigma_{j}\otimes\sigma_{j})$.
The set \{$\hat{i},\hat{j},\hat{k}$\} is the usual orthonormal basis
for $\mathbb{R}^{3}$ and $\sigma_{j}$ are the Pauli matrices. \\
In the following, we study a two-qubit system $ab$ prepared in the
state $\rho$ with each party interacting with its own local-independent
environment. Thus, if the overall initial state is $\rho\otimes|E_{0}^{a}\rangle\langle E_{0}^{a}|\otimes|E_{0}^{b}\rangle\langle E_{0}^{b}|$,
the closed system evolved state is $U_{a}\otimes U_{b}(\rho\otimes|E_{0}^{a}\rangle\langle E_{0}^{a}|\otimes|E_{0}^{b}\rangle\langle E_{0}^{b}|)U_{a}^{\dagger}\otimes U_{b}^{\dagger}$,
with $U_{s}$ being an unitary transformation in $\mathcal{H}_{s}\otimes\mathcal{H}_{E_{s}}$,
for $s=a\mbox{, }b$. The partial trace leads then to the evolved
state for the bipartite system $ab$ \cite{nielsen2000quantum,Preskill,Wilde}:
\begin{equation}
\rho_{t}={\textstyle \sum_{m,n}}K_{m}^{a}\otimes K_{n}^{b}\rho K_{m}^{a\dagger}\otimes K_{n}^{b\dagger},\label{eq:evolved}
\end{equation}
with the matrix elements of the Kraus' operators being $\langle s_{j}|K_{m}^{s}|s_{k}\rangle=(\langle s_{j}|\otimes\langle E_{m}^{s}|)U_{s}(|s_{k}\rangle\otimes|E_{0}^{s}\rangle)$,
where $|s_{j}\rangle$ and $|E_{m}^{s}\rangle$ are orthonormal basis
for $\mathcal{H}_{s}$ and $\mathcal{H}_{E_{s}}$, respectively. 
Using this, we obtain direct-dynamical Discord-Entanglement
relations considering two qubits prepared in a Bell-diagonal state
interacting with the Phase Damping channel.\\
The Phase Damping Channel represents the main source of decoherence
in quantum systems. This kind of environment obtains information about
the system's state without exchange of energy. From the phenomenological
map describing this process we can obtain the Kraus operators $K_{0}^{pd}=\sqrt{1-p_{s}}\mathbb{I}_{2}$,
$K_{1}^{pd}=\sqrt{p_{s}}|s_{0}\rangle\langle s_{0}|$, and $K_{2}^{pd}=\sqrt{p_{s}}|s_{1}\rangle\langle s_{1}|$.
Note that $p_{s}$ is the probability for a change in the environment
state, which is conditioned on the system state. From now on we will
assume symmetric environments, i.e., $p_{a}=p_{b}\equiv p$. Substituting
these Kraus operators in Eq. (\ref{eq:evolved}), we obtain the evolved
correlation vector:
\begin{equation}
\vec{r}_{pd}(p)=r_{1}(1-p)^{2}\hat{i}+r_{2}(1-p)^{2}\hat{j}+r_{3}\hat{k}.
\end{equation}
Following this evolution, the Bell-diagonal form of the evolved state is
preserved.
In Fig. \ref{fig:bds} we show the Bell-diagonal state space \cite{Caves_BDS}, its
classical-classical and separable sub-sets, and examples of trajectories
for the dynamics considered in Sec.\ref{sec:EntDiscBDS} and in Sec. \ref{sec:EntDiscDEPO}.

\begin{figure*}
\includegraphics[width=0.40\textwidth]{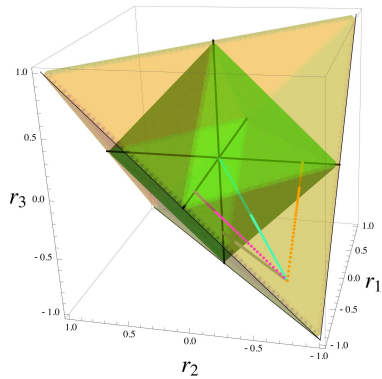}
\caption{(color online) $\mathbb{R}^{3}$ representation of the Bell-diagonal
state space. The $(r_{1},r_{2},r_{3})$ points in the tetrahedron
are the physical states. The entangled states are those in the yellow
regions. The separable convex subset is the green octahedron. The
three black axes are the classical-classical states. The dotted lines
represent the ``temporal'' trajectories for the initial state $r_{1}=r_{2}=r_{3}=-0.7$
evolved under local Phase Damping, Bit Flip, Bit Phase Flip and Depolarizing channels.}
\label{fig:bds}
\end{figure*}

\subsection{Entanglement - Discord relation using the Hilbert-Schmidt norm}

\label{sec:HSnorm}

In order to calculate the Discord we use the geometric measure introduced
by \cite{Dakic}:
\begin{equation}
D(\rho)= \min_{\chi}\Vert \rho-\chi\Vert ^{2}_{2},
\end{equation}
with the minimum $\chi$ taken over the set of the states with zero Discord. As it was noted by
\cite{Caves_BDS}, for the Bell-diagonal states the only states with zero Discord lay in the Cartesian
axes \ref{fig:bds}, thus the Discord is determined by the euclidean distance from the point $(\vert r_{1}(p)\vert ,\vert r_{2}(p)\vert ,\vert r_{3}(p)\vert )$
to its closest axis. Taking this into account we define the Discord:
\begin{equation}
D(p)=Min[D_{1}(p),D_{2}(p),D_{3}(p)], 
\end{equation}
where
\begin{eqnarray}
&& D_{i}(p)=r_{j}(p)^2+r_{k}(p)^2, i,j,k=1,2,3, \nonumber \\ 
&& i\neq j \neq k.
\end{eqnarray}
As we will discuss later, depending on the initial value of the correlation vector, the Discord
could present ``sudden changes'' during the evolution of the system, due to the fact that $D(p)$ could change between the three
different functions: $D_{1}(p)$, $D_{2}(p)$ or $D_{3}(p)$. \\
For the Entanglement calculation we use the known result \cite{Schirmer} that
for hermitian matrices which pertain to the same trace class, the Hilbert-Schmidt distance between
them equals the Euclidean distance between their corresponding correlation vectors: 
$\Vert\rho-\zeta\Vert_{2}=\Vert\vec{r}-\vec{z}\Vert_{2}.$ \\
The region of separable states is bounded by the octahedron \cite{Caves_BDS}
defined by $\vert r_{1}\vert+\vert r_{2}\vert+\vert r_{3}\vert\leq 1$. Using the symmetry of the separable states
region in the Bell-diagonal space \ref{fig:bds}, we can calculate the Entanglement for the state
with evolved correlation vector $\vec{r}(p)$ as the euclidean distance from the point
$(\vert r_{1}(p)\vert,\vert r_{2}(p)\vert,\vert r_{3}(p)\vert)$ to the plane $\vert r_{1}\vert +\vert r_{2}\vert +\vert r_{3}\vert =1$, when
$(\vert r_{1}(p)\vert +\vert r_{2}(p)\vert +\vert r_{3}(p)\vert )\geq 1$. 
In order to relate Discord Entanglement, we define:
\begin{equation}
E(\rho)= \min_{\rho^{sep}}\Vert \rho-\rho^{sep}\Vert ^{2}_{2},
\end{equation}

which for the case of Phase Damping takes the 
following expression:
\begin{equation}
 E(p)=\frac{\left(\vert r_{1}\vert (1-p)^{2}+\vert r_{2}\vert (1-p)^{2}+\vert r_{3} \vert -1\right)^2 }{3}.
\end{equation}
Depending on initial conditions we have three different scenarios:
(1) considering initial Bell-diagonal states with $\vert r_{1}\vert ,\vert r_{2}\vert < \vert r_{3}\vert $, the Discord takes 
the form $D_{3}(p)$ for every $p$. \\
(2) When $\vert r_{2}\vert ,\vert r_{3}\vert < \vert r_{1}\vert$   ($\vert r_{1}\vert \neq 0$), the Discord exhibits a sudden change for 
$p_{13}=1-\sqrt{\frac{\vert r_{3}\vert }{\vert r_{1}\vert }}$. It starts equal to $D_{1}(p)$ for $p\leq p_{13}$, and then
it becomes $D_{3}(p)$ for $p \geq p_{13}$. \\
(3) Similarly, if $\vert r_{1}\vert ,\vert r_{3}\vert < \vert r_{2}\vert$  ($\vert r_{2}\vert \neq 0$), the Discord starts as
$D_{2}(p)$ and then it becomes $D_{3}(p)$ for $p \geq p_{23}$, where $p_{23}=1-\sqrt{\frac{\vert r_{3}\vert }{\vert r_{2}\vert }}.$ \\
Taking into account that the Entanglement undergoes sudden death for $p_{SD}=1-\sqrt{\frac{1-\vert r_{3}\vert }{\vert r_{1}\vert +\vert r_{2}\vert }}$
we can express the Discord for $p \in [0,p_{SD}]$ as a function of the Entanglement in $p$ and the parameter $\vert r_{3}\vert $.\\
(1) When $D(p)=D_{1}(p)$ or $D_{2}(p)$, we have $D(p)=D_{i}(p)$ with  
\begin{equation}
D_{i}(p)=\vert r_{j}\vert ^2 \left(\frac{\sqrt{3E(p)}-\vert r_{3}\vert +1}{\vert r_{1}\vert +\vert r_{2}\vert } \right)^2 + \vert r_{3}\vert ^2 ,
\end{equation}
where $i,j=1,2$ and $i \ne j$. \\
(2) In the case that $D(p)=D_{3}(p)$, 
\begin{equation}
 D(p)= \left(\vert r_{1}\vert ^2 +\vert r_{2}\vert ^2\right) \left(\frac{\sqrt{3E(p)}-\vert r_{3}\vert +1}{\vert r_{1}\vert +\vert r_{2}\vert }\right)^2.
 \end{equation}
  In Fig.\ref{fig:graficas1} , for the case where the initial conditions verify $\vert r_{3}\vert <\vert r_{2}\vert < \vert r_{1}\vert$, we show  the curve of Discord as a function of Entanglement exhibiting one of the sudden changes described previously.

\subsection{Entanglement - Discord relation using the Trace norm}
\label{sec:TraceDistance}

In the following, we calculate Entanglement and Discord based on the trace norm:
\begin{eqnarray}
E(\rho) & = & \min_{\rho^{sep}} \Vert \rho - \rho^{sep}\Vert_{1}, \nonumber \\ 
D(\rho) & = & \min_{\rho^{cc}} \Vert \rho - \rho^{cc}\Vert_{1}.
\end{eqnarray}

\subsubsection{Entanglement}
As far as we know there is no expression for the Entanglement by means of the trace distance.
In order to calculate Entanglement we need to find the closest separable state to a Bell 
diagonal state.
There are several proposals to determine the closest separable state using different criteria
\cite{Lewenstein-Sanpera,Wellens,Ishizaka} and different measures of distance.
In this section we use a result obtained in the context of multipartite systems
\cite{Eberly_X}, which shows that the closest biseparable state to a general $X$ state
is also an $X$ state. We propose the following criterion: to search for the closest separable state to an
entangled X state over the X states with the same populations as the entangled one.
With this assumption we get a very meaningful result. 
\begin{proposition}
Let $\rho_{X}$ be the density matrix associated to a quantum $X$ state and $\sigma_{X}$ its closest separable
state (with the same populations as $\rho_{X}$) under the trace distance. 
The trace distance between $\rho_{X}$ and $\sigma_{X}$ is the Concurrence of
$\rho_{X}$.
\end{proposition}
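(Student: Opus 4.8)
The plan is to exploit the $X$ structure to reduce the trace-distance minimization to two decoupled scalar problems, and then to match the outcome against the closed form for the concurrence of an $X$ state.

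First I would write both states in $X$ form: $\rho_X$ with populations $a,b,c,d$ on the diagonal and anti-diagonal entries $w$ in the outer corners (positions $(1,4)$) and $z$ in the inner block (positions $(2,3)$), and $\sigma_X$ with the same populations but anti-diagonal entries $w'$ and $z'$. Because the populations coincide, $\rho_X-\sigma_X$ is Hermitian with vanishing diagonal and decouples into two $2\times2$ anti-diagonal blocks on $\mathrm{span}\{|00\rangle,|11\rangle\}$ and $\mathrm{span}\{|01\rangle,|10\rangle\}$. A block $\left(\begin{smallmatrix}0 & \delta\\ \delta^{*} & 0\end{smallmatrix}\right)$ has eigenvalues $\pm|\delta|$, so $\Vert\rho_X-\sigma_X\Vert_{1}=2|w-w'|+2|z-z'|$.

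Next I would characterize the separable $\sigma_X$ with the fixed populations. In $2\times2$ separability is equivalent to positivity of the partial transpose, and on the $X$ family the partial transpose simply interchanges the two anti-diagonal entries. Requiring both $\sigma_X\succeq0$ and $\sigma_X^{T_B}\succeq0$ then yields $|w'|,|z'|\le M$ with $M:=\min\{\sqrt{ad},\sqrt{bc}\}$. The minimization decouples, and for each scalar the nearest feasible point is the projection onto a disc of radius $M$, giving $\min|w-w'|=\max\{0,|w|-M\}$ and likewise for $z$. Hence $\min\Vert\rho_X-\sigma_X\Vert_{1}=2\max\{0,|w|-M\}+2\max\{0,|z|-M\}$.

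The last and most delicate step is to identify this with $C(\rho_X)=2\max\{0,|z|-\sqrt{ad},|w|-\sqrt{bc}\}$. The key point is that positivity of the \emph{original} state already forces $|w|\le\sqrt{ad}$ and $|z|\le\sqrt{bc}$, which renders the two entanglement conditions mutually exclusive: $|z|>\sqrt{ad}$ can occur only when $bc>ad$ (so $M=\sqrt{ad}$ and the $w$-term vanishes), while $|w|>\sqrt{bc}$ can occur only when $ad>bc$ (so $M=\sqrt{bc}$ and the $z$-term vanishes); if neither holds the state is already separable and both terms vanish. In each branch exactly one surviving term coincides with the single nonzero term of the concurrence. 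I expect this case bookkeeping — deciding which of $\sqrt{ad},\sqrt{bc}$ is the minimum and checking that positivity of $\rho_X$ kills the spurious term — to be the main obstacle, alongside the preliminary justification (via the cited $X$-state result) that restricting the search to $X$ states with matching populations retains the genuine optimizer.
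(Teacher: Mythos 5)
Your proposal is correct and follows essentially the same route as the paper: restrict to $X$ states with matching populations, compute the trace norm blockwise as $2|w-w'|+2|z-z'|$, impose $|w'|,|z'|\le\min\{\sqrt{ad},\sqrt{bc}\}$, and match the minimized distance to the Yu--Eberly concurrence formula. If anything, your version is tighter than the paper's own proof, which merely asserts the separability bound and the case bookkeeping (and contains a typo, setting $e'=\sqrt{ad}$ where $\sqrt{bc}$ is meant, with spurious absolute values in the resulting distances), whereas you derive the bound from PPT, obtain the optimum as a projection onto a disc, and make explicit the mutual exclusivity of the two entanglement conditions forced by positivity of $\rho_{X}$.
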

\begin{proof} - Let's consider $\rho_{X}$, a density matrix associated to a general $X$ state: \\
$\rho_{X}=$ $\begin{pmatrix}
  a& 0& 0& e \\
  0& b& f& 0 \\
  0& f& c& 0 \\
  e& 0& 0& d
 \end{pmatrix}.$\\
To represent a physical state, $e$ and $f$ 
must fulfill the following restrictions: $\vert e\vert \leq \sqrt{ad}$ and $\vert f\vert \leq \sqrt{bc}$.
We will search for the closest separable state to $\rho_{X}$ over the set of $X$
matrices constrained to having the same diagonal as $\rho_{X}$: \\
$\sigma_{X}=$ $\begin{pmatrix}
  a& 0& 0& e' \\
  0& b& f'& 0 \\
  0& f'& c& 0 \\
  e'& 0& 0& d
 \end{pmatrix}.$\\
If $\sigma_{X}$ represents a physical and separable
state, $e'$ and $f'$ must fulfill the following condition:
\begin{equation}
 \vert e'\vert , \vert f'\vert \leq \min \{\sqrt{ad},\sqrt{bc}\}.
\end{equation}
As we are searching for the closest separable $X$ state according to the trace distance, we
have to minimize 
\begin{equation}
 \Vert\rho_{X}-\sigma_{X}\Vert_{TD}= 2 \left( \vert\,\,(\vert e\vert-\vert e'\vert)\,\,\vert + \,\,(\vert f\vert-\vert f'\vert)\,\,\vert \right),
\end{equation}
where we used the fact that the minimum is attained when the off diagonal elements of $\rho_{X}$
and $\sigma_{X}$ have the same sign.\\
The expression above reaches its minimum value when both terms are 0, but this only happens
when both matrices are equal. In the general case we can only equate to zero one term, depending
on the value of the elements of $\rho_{X}$: (1) if $\sqrt{bc}< \sqrt{ad} $ it is easy to see that the
minimum is reached when $f'= f$ and $e'=\sqrt{ad}.$ In this case we have 
\begin{equation}
 \Vert \rho_{X}-\sigma_{X}\Vert_{1}= 2 \vert\,\,\vert e\vert-\sqrt{bc}\,\, \vert .
\end{equation}
(2) If  $\sqrt{ad}< \sqrt{bc} $ the minimum is reached when $e'= e$ and $f'=\sqrt{ad}.$ In this case we
have 
\begin{equation}
 \Vert\rho_{X}-\sigma_{X}\Vert_{1}= 2 \vert \,\, \vert f\vert-\sqrt{ad}\,\vert .
\end{equation}
We can easily see that for $\sqrt{ad}=\sqrt{bc}$, the matrix $\rho_{X}$ is separable. \\ 
As demonstrated by \cite{YuEberlyConcX}, for a general $X$ state the Concurrence has the following expression:
\begin{equation}
 C(\rho_{X})= 2 \max \{0,\vert e\vert -\sqrt{bc},\vert f\vert-\sqrt{ad} \}.
\end{equation}
This bring us to our result: the trace distance between a general $X$ state and its
closest separable state is the Concurrence of the general $X$ state. \qed
\end{proof}
Taking into account this result we have for the geometric Entanglement: 
\begin{equation}
E(\rho) =  \min_{\rho^{sep}}d(\rho,\rho^{sep})=C(\rho). 
\end{equation}
This result allows us to calculate the trace norm based geometric Entanglement, even for evolutions that don't preserve the  
Bell-diagonal form, as for example $X$ states under the action of the Amplitude Damping channel.\\
At this point it is worth mentioning that for the evolution of the Bell-diagonal states under the Phase Damping channel, 
the Entanglement is determined by the sign of $r_{3}$: if $r_{3}>0$ then $C_{1}(\rho_{X})=2 (\vert e\vert -\sqrt{bc})$ and
if $r_{3}<0$, $C_{2}(\rho_{X})=2 (\vert f\vert-\sqrt{ad})$, for all the evolution until become separable. 
In a pictorical view: the Entanglement of states that belong to the upper part of the tetrahedron is described
by $C_{1}(\rho_{X})$ and the Entanglement of states that belong to the lower part of the tetrahedron is described
by $C_{2}(\rho_{X})$.
\subsubsection{Discord}
Expressions to calculate the Discord by the Schatten-1 norm for the Bell-diagonal states are known \cite{Paula_Oliveira_Sarandy},
\begin{equation}
 D=int\{|c_{1}|,|c_{2}|,|c_{3}|\},
\end{equation}
where ``int'' means the intermediate value. 
\subsubsection{Discord-Entanglement relations}
Having properly defined the geometric quantifiers for Entanglement and Discord using the Schatten-1 norm
we can relate them.
To calculate the Discord we have to establish the ordering between the absolute values of the
coefficients $|r_{i}|$ as we did when we used the Hilbert-Schmidt norm. \\ 
(1) Considering $|r_{i}|<|r_{j}|<|r_{3}|$,
where $i, j=1,2$ and $i \ne j$, it is clear that the Discord $D$ is expressed by $D=|r_{j}(p)|$ for all the evolution.\\
The Entanglement of the state quantified by the concurrence is $C(p)=\frac{1}{2} \max \{ |r_{2}\pm r_{1}|(1-p)^2-|1\pm r_{3}| \}$.
Combining the two expressions we have: 
\begin{equation}
 D(p)=\frac{2C(p)+|1 \pm r_{3} | }{|r_{2} \pm r_{1}|}|r_{j}|, \forall p \in [0,p_{SD1}],
\label{eq:Discord0}
\end{equation}
where $p_{SD1}$ is when Entanglement sudden death occurs; for this case
$p_{SD1}=1-\sqrt{\frac{|1\pm r_{3}|} {|r_{2}\pm r_{1} |}}$. \\

(2) When $|r_{1}|<|r_{3}|<|r_{2}|$, the Discord exhibits a sudden change in $p_{I}=1-\sqrt{\frac{|r_{3}|}{|r_{2}|}}$:
it starts being equal to $|r_{3}|$ for $p \leq p_{I}$ and then becomes equal to $|r_{2}(p)|$ for $p \geq p_{I} $. We note that
$p_{I}=p_{23}$, $p_{23}$ being the time for the sudden change of geometric Discord calculated using the Hilbert-Schmidt norm.\\
(3) When $|r_{3}|<|r_{2}|<|r_{1}|$, the Discord
exhibits two sudden changes, in $p_{I}$ and in $p_{II}= 1-\sqrt{\frac{|r_{3}|}{|r_{1}|}}$, verifying for
this initial conditions that $p_{I} < p_{II}$. For $p<p_{I}$ the Discord  $D$ is expressed by $D=|r_{2}(p)|$, for $p_{I}\leq p<p_{II}$
the Discord is expressed by $D=|r_{3}|$ and for $p_{II}\leq p$ 
\begin{equation}
 D(p)=|r_{1}(p)|=\frac{2C(p)+|1 \pm r_{3}|}{|r_{2} \pm r_{1}|}|r_{1}|.
\label{eq:Discord0}
\end{equation}
We note that, analogous to case (2),
$p_{II}=p_{13}$, $p_{13}$ being the time for the sudden change of the Hilbert-Schmidt norm based geometric Discord.\\

The same results are found for the Bit Flip and the Bit Phase Flip channels as
we can see from their correlation vectors:
\begin{equation}
\vec{r}_{bf}(p)=r_{1}\hat{i}+r_{2}(1-p)^{2}\hat{j}+r_{3}(1-p)^{2}\hat{k},
\end{equation}
\begin{equation}
\vec{r}_{bpf}(p)=r_{1}(1-p)^{2}\hat{i}+r_{2}\hat{j}+r_{3}(1-p)^{2}\hat{k},
\end{equation}
which are obtained interchanging $r_{3}$ with $r_{1}$ and  $r_{3}$ with $r_{2}$, respectively, in the Phase Damping channel.

In Fig. \ref{fig:graficas1}  we also show the curve of Discord as a function
of Entanglement, for the Trace distance with initial condition $|r_{3}|<|r_{2}|<|r_{1}|$, exhibiting two sudden changes.

\begin{figure*} 
\includegraphics[width=0.40\textwidth]{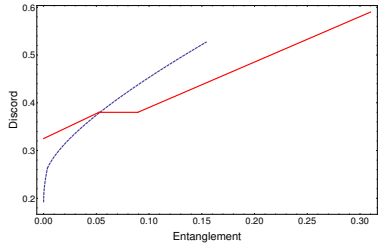}
\caption{(color online) Discord as a function of Entanglement for the initial state $r_{1}=0.65, r_{2}=0.59, r_{3}=-0.38$
evolved under local Phase Damping channel. The thick (red) line corresponds to the Trace distance and the
dashed (blue) line to the Hilber-Schmidt distance.}
\label{fig:graficas1}
\end{figure*}

\section{Entanglement-Discord relations for Bell-diagonal
states evolving under local Depolarizing channel}

\label{sec:EntDiscDEPO}
The Depolarizing channel describes the interaction of the system with
the environment that gets its state mixed up with the maximally entropic
state, with a probability $p$, i.e., $\rho\rightarrow(1-p)\rho+p(2^{-1}\mathbb{I}_{2}).$
This is equivalent to the possibility of occurrence of all the Pauli
errors with equal probability $p/4$. The Kraus operators for this
kind of noise channel are $K_{0}^{d}=\sqrt{1-3p/4}\mathbb{I}_{2}$,
$K_{1}^{d}=\sqrt{p/4}\sigma_{1}$, $K_{2}=\sqrt{p/4}\sigma_{2}$,
and $K_{3}=\sqrt{p/4}\sigma_{3}$. And the evolved correlation vector
is:
\begin{equation}
\vec{r}_{d}(p)=r_{1}(1-p)^{2}\hat{i}+r_{2}(1-p)^{2}\hat{j}+r_{3}(1-p)^{2}\hat{k}.
\end{equation}

\subsection{Entanglement - Discord relation using the Hilbert-Schmidt norm}
Applying the same reasoning we used above, we determine the Entanglement 
calculating the distance of the quantum state to the separable octahedron.
For this channel the Entanglement is expressed by
\begin{equation}
 E(p)=\frac{1}{3}\left((1-p)^2 \left(|r_{1}|+|r_{2}|+|r_{3}|\right)-1 \right)^2,
\end{equation}
the sudden death of Entanglement occurs for $p_{SD}= 1- \sqrt{\frac{1}{|r_{1}|+|r_{2}|+|r_{3}|}}.$\\
\label{sec:HSnormDepo}
For the Discord we have to determine the minimum
\begin{eqnarray}
 &&D_{i}(p)=(1-p)^4 \left(r_{j}^2 + r_{k}^2 \right ),   \,i,\,j,\,k =1,2,3, \nonumber \\ 
 &&i\neq j \neq k.
\end{eqnarray}
For this channel the Discord dynamics is very simple, never exhibiting
a sudden change which is obvious from its mathematical expression.\\
Once more, Discord and Entanglement can be analitically related:      
\begin{eqnarray}
 &&D_{i}(p)=\left(\frac{\sqrt{3E(p)}+1}{\left( |r_{1}|+|r_{2}|+|r_{3}| \right) } \right)^2\left( r_{j}^2 + r_{k}^2 \right), \nonumber \\
 &&i\neq j \neq k.
\end{eqnarray}
\subsection{Entanglement - Discord relation using the Trace norm}
\label{sec:TraceDistanceDepo}
As we discussed previously, the geometric Entanglement for the Bell-diagonal
states, measured by the Trace norm is the concurrence of the state. In this
case:
\begin{equation} 
C(p)= \frac{1}{2} \max \{0,|r_{1}\pm r_{2}|(1-p)^2 -\left(1 \pm r_{3} (1-p)^2 \right) \}
\label{eq:Entanglement0}
\end{equation}
The expression of the geometric Discord is
\begin{equation}
D(p)=|r_{int}|(1-p)^2
\end{equation}
and the relation between the geometric Discord and the geometric Entanglement is:
\begin{equation}
D(p)=|r_{int}|\frac{2C(p)+1}{|r_{1}\pm r_{2}| \mp r_{3}},
\end{equation}
where the $\pm$ depends on the maximum of Eq.(\ref{eq:Entanglement0}). 

In Fig. \ref{fig:graficasDepo} we show the curves of Discord as a function
of Entanglement obtained in this section for the Hilbert-Schmidt and the Trace distance.

\begin{figure*}
\includegraphics[width=0.40\textwidth]{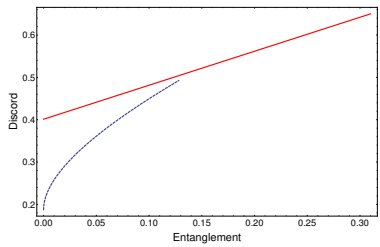}
\caption{(color online) Discord as a function of Entanglement for the initial state $r_{1}=0.65, r_{2}=0.59, r_{3}=-0.38$
evolved under local Depolarizing channel. The thick (red) line corresponds to the Trace distance and the
dashed (blue) line to the Hilber-Schmidt distance.}
\label{fig:graficasDepo}
\end{figure*}

\section{Concluding remarks}
\label{sec:conclusion}
 
 In this work, using a geometric approach, we obtained dynamical Entanglement-Discord relations for a family of initial Bell-diagonal states under the action of several local trace-preserving quantum-channels: Bit Flip, Bit Phase Flip, Phase Damping and Depolarizing.  It is worth pointing out that these channels keep the Bell-diagonal feature of the initial Bell-diagonal states, this made possible to work with closed analytic expressions.
A direct dynamical relation as the one presented here can be useful in several instances. As a matter of fact, whenever a result is derived for one of the quantities, we may apply it to simplify the description of the other one. An interesting example related to the obtention of an equation of motion for Discord in terms of the equation of motion for Entanglement can be considered. In the last few years, several works were dedicated to obtain factorization equations for Entanglement \cite{FactorLawEnt, Tiersch2009, ExpEntfactorlaw, OFarias1414, Yuan2013} , which can considerably simplify the experimental description of its dynamics. Via our relations, we can use those results in order to simplify the experimental description of the evolution of quantum Discord with time. \\
In addition, we obtained a very meaningful result working with the geometric measure of Entanglement in terms of the trace norm: we found that the trace distance between a general entangled X state and its closest separable state is the concurrence of the former. This result would allow to extend our study to more general evolutions that don't preserve the Bell-diagonal shape of the states, such as the Amplitude Damping channel. Some work in that direction is in progress.

\begin{acknowledgements}
JM aknowledges the financial support of the Brazilian funding agencies: Conselho Nacional de Desenvolvimento Cient\'ifico e Tecnol\'ogico (CNPq), process 441875/2014-9, Instituto Nacional de Ci\^encia e Tecnologia de Informa\c{c}\~ao Qu\^antica (INCT-IQ), process 2008/57856-6, and Coordena\c{c}\~ao de Desenvolvimento de Pessoal de N\'{i}vel Superior (CAPES), process 6531/2014-08. JM thanks the hospitality of the Physics Institute and Laser Spectroscopy Group at the Universidad de la Rep\'{u}blica, Uruguay. 
\end{acknowledgements}


%
%

\end{document}